\newtheorem{theorem}{Theorem}
\newtheorem{lemma}{Lemma}
\newtheorem{definition}{Definition}
\theoremstyle{definition}
\newcommand{\Reals}{\ensuremath{\mathbb{R}}}
\newcommand{\expectation}{\ensuremath{\mathbb{E}}}
\newcommand{\Var}{\mathrm{Var}}
\newcommand{\set}{\ensuremath{\mathcal}}
\newcommand{\dif}{\mathrm{d}}
\newcommand{\A}{\ensuremath{\mathcal{A}}}
\begin{document}
\thispagestyle{empty}
\setcounter{page}{1}
\setlength{\baselineskip}{1.15\baselineskip}

\title{\huge{On Relations Between Tight Bounds for Symmetric $f$-Divergences and Binary Divergences}\\[0.2cm]}
\author{Tomohiro Nishiyama\\ Email: htam0ybboh@gmail.com}
\date{}
\maketitle
\thispagestyle{empty}

\begin{abstract}
Minimizing divergence measures under a constraint is an important problem. We derive a sufficient condition that binary divergence measures provide lower bounds for symmetric divergence measures under a given triangular discrimination or given means and variances. 
Assuming this sufficient condition, the former bounds are always tight, and the latter bounds are tight when two probability measures have the same variance. An application of these results for nonequilibrium physics is provided.

\end{abstract}
\noindent \textbf{Keywords:} Triangular discrimination, Le Cam distance, $f$-divergence, Kullback-Leibler divergence, Relative entropy, Hellinger distance, Jensen-Shannon divergence, Thermodynamic uncertainty relation.
 
\section{Introduction}
Divergence measures such as the Kullback–Leibler divergence~\cite{kullback1951information}, Hellinger distance~\cite{hellinger1909neue} and triangular discrimination~\cite{le2012asymptotic} are widely used in information theory, statistics, machine learning, physics, and other theoretical and applied branches of mathematics.
They all belong to an important class of divergence measures, defined by means of convex functions $f$, and named $f$ -divergences~\cite{csiszar1967information,csiszar1967topological,csiszar1972class, sason2016f}.
Minimizing $f$-divergences under a constraint is an important problem. For arbitrary symmetric $f$-divergences, the binary $f$-divergences, which are $f$-divergences between two-element probability measures, provide tight lower bounds under a given total variation distance~\cite{gilardoni2006minimum,sason2014tight}. When means and variances of two probability measure are fixed, the binary $f$-divergences also provide tight lower bounds for the Kullback-Leibler divergence, Hellinger distance, and $\chi^2$-divergence~\cite{nishiyama2020relations, nishiyama2020tight, nishiyama2021tight}. 
In particular the tight bound for the $\chi^2$-divergence is known as the Hammersley-Chapman-Robbins bound~\cite{chapman1951minimum}.

The main purpose of this work is to derive a sufficient condition that the binary $f$-divergences provide lower bounds for symmetric $f$-divergences under a given triangular discrimination or given means and variances.
Assuming this sufficient condition, the former bounds are tight for arbitrary symmetric $f$-divergences although the latter bounds are not necessarily tight. We show the latter bounds reduce to tight bounds for arbitrary symmetric $f$-divergences when two probability measure have the same variance. These results allow us to derive novel inequalities for the triangular discrimination. Finally, we introduce an example of applying our results to derive an important relation in nonequilibrium physics, which was recently derived in \cite{Vo_2022}.

\section{Preliminaries}
This section provides definitions of divergence measures which are used in this note.
\begin{definition}
Let $f: (0,\infty)\rightarrow \Reals$ be a convex function with $f(1) = 0$. Let $P$ and $Q$ be probability measures defined on a measurable space $(\mathcal{X}, \mathscr{F})$, let $\mu$ be a dominating measure
of $P$ and $Q$ (i.e., $P, Q \ll \mu$), and let $p := \frac{\mathrm{d}P}{\mathrm{d}\mu}$
and $q := \frac{\mathrm{d}Q}{\mathrm{d}\mu}$ be the densities of $P$ and $Q$ with respect
to $\mu$. The {\em $f$-divergence} from $P$ to $Q$ is given by
\begin{align} \label{eq:fD}
D_f(P\|Q) := \int q \, f \Bigl(\frac{p}{q}\Bigr) \, \mathrm{d}\mu,
\end{align}
where
\begin{align}
& f(0) := \underset{t \to 0^+}{\lim} \, f(t), \quad  0 f\biggl(\frac{0}{0}\biggr) := 0, \nonumber \\[0.1cm] 
& 0 f\biggl(\frac{a}{0}\biggr) \nonumber
:= \lim_{t \to 0^+} \, t f\biggl(\frac{a}{t}\biggr)
= a \lim_{u \to \infty} \frac{f(u)}{u}, \quad a>0.
\end{align}
It should be noted that the right side of \eqref{eq:fD} does not depend on the dominating measure $\mu$.
\end{definition}
We define notable $f$-divergences: 
\begin{itemize}
\item {\em Triangular discrimination} (a.k.a. Vincze-Le Cam distance): $f(t) = \frac{(1-t)^2}{2(1+t)}$,
\begin{align}
\Delta(P,Q):=\frac12\int \frac{(p-q)^2}{(p+q)}\dif\mu.
\end{align}
\item {\em Kullback-Leibler divergence} (a.k.a. relative entropy): $f(t) = t\log t$,
\begin{align}
D(P\|Q):=\int p\log\frac{p}{q} \dif\mu.
\end{align}

\item {\em Squared  Hellinger distance}: $f(t) = \frac12(\sqrt{t}-1)^2$,
\begin{align}
H^2(P,Q):=\frac12\int (\sqrt{p}-\sqrt{q})^2\dif\mu.
\end{align}

\item {\em Jensen-Shannon divergence} (a.k.a. capacitory discrimination): $f(t) :=\frac12 t\log t-\frac{(1+t)}2\log\frac{1+t}2$,
\begin{align}
\mathrm{JS}(P,Q):=\frac12\Bigl(D\Bigl(P\|\frac{P+Q}2\Bigr)+D\Bigl(Q\|\frac{P+Q}2\Bigr)\Bigr).
\end{align}
\end{itemize}
In this note, logarithms have an arbitrary common base except Subsection~\ref{subsec_thermo}.

\begin{definition}
Let $P$ and $Q$ be probability measures defined on the measurable space $(\Reals,\mathscr{B})$, where $\Reals$ is the real
line and $\mathscr{B}$ is the Borel $\sigma$–algebra of subsets of $\Reals$.
Let $\set{P}[m_P, \sigma_P; m_Q, \sigma_Q]$ be a set of pairs of probability measures $(P,Q)$ under given means and variances, i.e.,
\begin{align}
\label{constraints}
& \expectation[X] =: m_P, \; \expectation[Y] =: m_Q,
\quad \Var[X] =: \sigma_P^2, \;  \Var[Y] =: \sigma_Q^2,
\end{align}
where $X\sim P$ and $Y\sim Q$.
\end{definition}

\begin{definition} 
\label{def:binary f div}
For $t\in[0,1]$, let $R_t:=\Bigl(\frac{1-t}2,\frac{1+t}2\Bigr)$ and $R_t^\dagger:=\Bigl(\frac{1+t}2,\frac{1-t}2\Bigr)$ be two-element probability measures defined on the same set.
Letting  $g:[0,1]\rightarrow [0,\infty)$, we define the binary $f$-divergence between $R_t$ and $R^\dagger_t$ by
\begin{align}
\label{def_binary}
g(t):=D_f(R_t\|R_t^\dagger)=\frac{(1-t)}{2} f\Bigl(\frac{1+t}{1-t}\Bigr)+\frac{(1+t)}{2} f\Bigl(\frac{1-t}{1+t}\Bigr).
\end{align}

\end{definition}

\section{Main results}
\label{sec_main}
Throughout this note, we suppose that $f$ is twice differentiable and strictly convex.
\begin{theorem}
Let $P$ and $Q$ be probability measures defined on a measurable space $(\mathcal{X}, \mathscr{F})$.
If $\frac1t g'(t)$ is non-decreasing, for every $d\in[0,1]$, 
\label{th_1}
\begin{align}
\label{th1_1}
\min_{(P,Q): \Delta(P,Q)=d}\frac12\Bigl(D_f(P\|Q)+ D_f(Q\|P)\Bigr)= g(\sqrt{d}).
\end{align}
The lower bound is attained by $(P,Q)=(R_{\sqrt{d}},R^\dagger_{\sqrt{d}})$.
\end{theorem}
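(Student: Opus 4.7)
My plan is to reduce both the constraint and the objective to expectations against a common probability measure, collapsing the entire optimization into a one-dimensional Jensen inequality.

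The first step is a pointwise reparametrization. At each point $x$ I introduce the local asymmetry $t(x) := |p(x)-q(x)|/(p(x)+q(x)) \in [0,1]$, so that the normalized pair $(p/(p+q),\,q/(p+q))$ equals either $R_{t(x)}$ or $R_{t(x)}^\dagger$ depending on the sign of $p-q$. A direct substitution, recognizing the formula \eqref{def_binary} defining $g$, then yields the two pointwise identities
\begin{align*}
\tfrac12\!\left[q\,f\!\left(\tfrac{p}{q}\right)+p\,f\!\left(\tfrac{q}{p}\right)\right] = \tfrac{p+q}{2}\,g(t), \qquad \frac{(p-q)^2}{2(p+q)} = \tfrac{p+q}{2}\,t^2.
\end{align*}
Since $\dif\nu := \tfrac{p+q}{2}\,\dif\mu$ defines a probability measure on $(\set{X},\mathscr F)$, integrating these identities gives $\tfrac12(D_f(P\|Q)+D_f(Q\|P)) = \expectation_\nu[g(T)]$ and $\Delta(P,Q) = \expectation_\nu[T^2]$, where $T:=t(X)$ with $X\sim\nu$. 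The problem therefore reduces to minimizing $\expectation[g(T)]$ among $[0,1]$-valued random variables $T$ satisfying $\expectation[T^2]=d$.

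Next I would set $S := T^2 \in [0,1]$ and $h(s) := g(\sqrt s)$, recasting the problem as minimizing $\expectation[h(S)]$ subject to $\expectation[S] = d$. Since $h'(s) = g'(\sqrt s)/(2\sqrt s)$, the hypothesis that $t\mapsto g'(t)/t$ is non-decreasing is precisely the assertion that $h'$ is non-decreasing, i.e.\ that $h$ is convex on $[0,1]$. Jensen's inequality then yields $\expectation[h(S)] \ge h(d) = g(\sqrt d)$. For the pair $(P,Q)=(R_{\sqrt d}, R_{\sqrt d}^\dagger)$ the pointwise $t$ is constant and equal to $\sqrt d$ on both atoms, so $T$ is a.s.\ constant, Jensen becomes an equality, and a direct check confirms $\Delta = d$ and symmetric divergence $g(\sqrt d)$; the lower bound is thereby attained.

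The whole argument is a short chain of reductions, and I do not expect any serious obstacle. The only point requiring minor care is the behavior of $g'(t)/t$ at $t=0$, and correspondingly of $h'$ at $s=0$: the twice differentiability and strict convexity of $f$ together with $f(1)=0$ imply $g(0)=g'(0)=0$, so $g'(t)/t$ extends continuously at $0$ to $g''(0)$, and hence the equivalence between monotonicity of $g'(t)/t$ and convexity of $h$ holds on all of $[0,1]$.
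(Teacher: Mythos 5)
Your proof is correct, and while it rests on the same basic reduction as the paper's --- weighting everything by the probability measure $\tfrac{p+q}{2}\,\dif\mu$ and collapsing the problem to a scalar Jensen inequality --- you run the Jensen step in the opposite (dual) direction, and this is a genuine simplification. The paper works with the inverse function $G=g^{-1}$: its Lemma~\ref{lem_1} proves, via the computation \eqref{pr1_1}--\eqref{pr1_4} (which needs $g'>0$ and the second derivative of $G$), that $G(\cdot)^2$ is concave, and then applies Jensen to the concave function $G(\cdot)^2$ evaluated at the pointwise symmetrized divergence, as in \eqref{pr2_3}, before mapping back through $g$. You instead apply Jensen to the convex function $h(s):=g(\sqrt{s})$ evaluated at $S=T^2$, and the convexity of $h$ is literally the hypothesis restated, since $h'(s)=\tfrac12\,g'(\sqrt{s})/\sqrt{s}$; your pointwise identities $\tfrac12\bigl(q f(p/q)+p f(q/p)\bigr)=\tfrac{p+q}{2}g(t)$ and $\tfrac{(p-q)^2}{2(p+q)}=\tfrac{p+q}{2}t^2$ check out against \eqref{def_binary}. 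The two arguments are formally equivalent ($G(\cdot)^2$ is the inverse of $h$, and for increasing bijections one is concave iff the other is convex), but your route bypasses Lemma~\ref{lem_1} entirely: no inverse function, no $g''$, and no need to first establish that $g$ is strictly increasing for the bound itself (monotonicity of $g$ is only implicitly used by the paper to invert the Jensen step, which you never do). The attainment argument for $(R_{\sqrt{d}},R^\dagger_{\sqrt{d}})$ is the same in both. What the paper's formulation buys in exchange is the remark after the proof: with $G(\cdot)$ in place of $G(\cdot)^2$ the same machinery recovers the known tight bounds under a total variation constraint, a by-product your phrasing does not immediately expose.
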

Here, $'$ denotes the derivative with respect to $t$.

\begin{theorem} \label{th_2}
If $\frac1t g'(t)$ is non-decreasing, 
\begin{align}
\label{th2_1}
\inf_{(P,Q)\in\set{P}[m_P, \sigma_P; m_Q, \sigma_Q]}\frac12\Bigl(D_f(P\|Q)+ D_f(Q\|P)\Bigr)\geq g(s),
\end{align}
where 
\begin{align}
\label{th2_2}
s&:=\frac{|a|}{\sqrt{2(\sigma_P^2+\sigma_Q^2)+a^2}}, \\
a&:=m_P-m_Q.
\end{align}
If $\sigma_P=\sigma_Q=:\sigma$, the inequality \eqref{th2_1} reduces to 
\begin{align}
\label{th_2_3}
\min_{(P,Q)\in\set{P}[m_P, \sigma; m_Q, \sigma]}\frac12\Bigl(D_f(P\|Q)+ D_f(Q\|P)\Bigr) = g(r), 
\end{align}
where 
\begin{align}
\label{th_2_4}
r:=\frac{|a|}{\sqrt{4\sigma^2+a^2}}.
\end{align}
The lower bound is attained by $(P,Q)=(R_r,R^\dagger_r)$.
\end{theorem}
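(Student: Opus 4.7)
The strategy is to reduce Theorem~\ref{th_2} to Theorem~\ref{th_1} by lower bounding the triangular discrimination in terms of the constrained moments. For any measurable $\phi:\Reals\to\Reals$, factor $\phi(p-q)=\bigl(\phi\sqrt{p+q}\bigr)\cdot\bigl((p-q)/\sqrt{p+q}\bigr)$ and apply Cauchy--Schwarz, using $\int(p-q)^2/(p+q)\,\dif\mu=2\Delta(P,Q)$, to obtain
\begin{align}
\bigl(\expectation_P[\phi(X)]-\expectation_Q[\phi(Y)]\bigr)^2\leq 2\Delta(P,Q)\bigl(\expectation_P[\phi^2(X)]+\expectation_Q[\phi^2(Y)]\bigr).
\end{align}
Specializing to $\phi(x)=x-c$ turns the left side into $a^2$ and the bracketed factor on the right into $\sigma_P^2+\sigma_Q^2+(m_P-c)^2+(m_Q-c)^2$. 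Minimizing the latter in $c$ at $c=(m_P+m_Q)/2$ gives $\sigma_P^2+\sigma_Q^2+a^2/2$, so $\Delta(P,Q)\geq a^2/\bigl(2(\sigma_P^2+\sigma_Q^2)+a^2\bigr)=s^2$.

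Next I would verify that $d\mapsto g(\sqrt{d})$ is non-decreasing on $[0,1]$ so that Theorem~\ref{th_1} yields \eqref{th2_1}. The definition \eqref{def_binary} is invariant under $t\mapsto -t$, so $g$ is even; together with $g(0)=f(1)=0$ and $g\geq 0$, this forces $g'(0)=0$. The hypothesis that $g'(t)/t$ is non-decreasing then gives $g'(t)/t\geq\lim_{\tau\to 0^+}g'(\tau)/\tau=g''(0)\geq 0$, whence $g'\geq 0$ on $[0,1]$ and $g(\sqrt{d})$ is non-decreasing. Applying Theorem~\ref{th_1} at $d=\Delta(P,Q)\geq s^2$ then delivers \eqref{th2_1}.

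For the equal-variance case $\sigma_P=\sigma_Q=\sigma$, I would exhibit $(R_r,R_r^\dagger)$ on a two-point support $\{x_1,x_2\}$ as the minimizer. Direct calculation gives $\Var[X]=\Var[Y]=\tfrac{1-r^2}{4}(x_2-x_1)^2$ (equal automatically) and $m_P-m_Q=r(x_2-x_1)$. Choosing $x_2-x_1=\sqrt{4\sigma^2+a^2}$ with sign matched to $a$ and centering the support at $(m_P+m_Q)/2$ realizes all four constraints while forcing $r=|a|/\sqrt{4\sigma^2+a^2}$ as in \eqref{th_2_4}; moreover $s=r$ in this case. Since $\Delta(R_r,R_r^\dagger)=r^2$ by a two-term sum, Cauchy--Schwarz is saturated by $\phi(x)=x-(m_P+m_Q)/2$ on this support, and Theorem~\ref{th_1} is tight on $(R_r,R_r^\dagger)$; hence the infimum is attained with value $g(r)$, proving \eqref{th_2_3}. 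The main obstacle I anticipate is identifying the correct test function in the Cauchy--Schwarz step: the linear choice $\phi(x)=x-c$ is precisely what couples the mean-difference constraint with the variance-sum constraint, and this coupling is what makes the same two-point construction simultaneously saturate Cauchy--Schwarz and Theorem~\ref{th_1}.
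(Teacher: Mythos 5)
Your proposal is correct and follows essentially the same route as the paper: the Cauchy--Schwarz step with the linear test function $\phi(x)=x-\tfrac{m_P+m_Q}{2}$ is exactly the paper's translation-plus-Sedrakyan argument (Lemma~\ref{lem_2}) giving $\Delta(P,Q)\geq s^2$, the passage to $g(s)$ uses the same monotonicity of $g$ (the paper's Lemma~\ref{lem_1}, which you re-derive), and your two-point attainment construction is the paper's Lemma~\ref{lem_3} with $(R_r,R_r^\dagger)$. The only cosmetic difference is that you invoke Theorem~\ref{th_1} as a black box at $d=\Delta(P,Q)$ rather than re-running the Jensen chain, which is an equivalent argument.
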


\subsection{Proofs}
We prove the following lemmas before proving theorems.
\label{section: proofs}
\begin{lemma}
\label{lem_1}
The binary $f$-divergence $g$ is strictly increasing. 
Let $G: [0,\infty)\rightarrow [0,1]$ be an inverse function of $g$. If $\frac1t g'(t)$ is non-decreasing, the square of $G$ is concave.
\end{lemma}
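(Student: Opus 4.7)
The proof naturally splits into two parts. For the strict monotonicity of $g$, my plan is to differentiate \eqref{def_binary} directly. Writing $u := (1+t)/(1-t)$ and $v := (1-t)/(1+t)$, the chain rule yields
\[
g'(t) = \tfrac{1}{2}\bigl[f(v) - f(u)\bigr] + \frac{f'(u)}{1-t} - \frac{f'(v)}{1+t}.
\]
Replacing $f(u) - f(v)$ by $f'(\xi)(u - v)$ via the mean value theorem for some $\xi \in (v, u)$ and using $u - v = 4t/(1 - t^2)$, one gets a linear combination of $f'(u)$, $f'(v)$, $f'(\xi)$ that is visibly positive for $t \in (0,1)$ because $f'$ is strictly increasing. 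A computation-free alternative is the data-processing inequality: a binary symmetric channel with flip probability $\epsilon = (t' - t)/(2t')$ sends $R_{t'} \mapsto R_t$ and $R_{t'}^\dagger \mapsto R_t^\dagger$, giving $g(t) \leq g(t')$ for $0 \leq t \leq t' \leq 1$, and strict convexity of $f$ promotes this to a strict inequality whenever $t < t'$.

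For the concavity of $G^2$, the plan hinges on the change of variables $\tau := t^2$. Define $h(\tau) := g(\sqrt{\tau})$ on $[0,1]$. The key observation is that $G^2$ is precisely the inverse of $h$: if $y = g(t)$ with $t \in [0,1]$, then $G(y)^2 = t^2 = \tau$ while $h(\tau) = y$. Differentiating,
\[
h'(\tau) = \frac{g'(\sqrt{\tau})}{2\sqrt{\tau}} = \frac{g'(t)}{2t},
\]
so the hypothesis that $t^{-1} g'(t)$ is non-decreasing translates exactly into $h$ being convex. Since $h$ is strictly increasing by the first part, its inverse $G^2$ is concave by the standard fact that the inverse of a convex, strictly increasing function is concave (which in turn is a one-line consequence of applying the increasing function $h^{-1}$ to both sides of the convexity inequality for $h$).

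The main obstacle is conceptual rather than computational: spotting the reparametrization $\tau = t^2$, and recognizing that the somewhat unusual object $G^2$ together with the hypothesis on $t^{-1} g'(t)$ has been set up precisely so that this substitution trivializes the concavity claim. A minor technical point is the behaviour at $t = 0$: since \eqref{def_binary} is manifestly invariant under $t \mapsto -t$, the function $g$ is even, so $g'(0) = 0$, and $g'(t)/t$ is a $0/0$ form there that extends continuously to $g''(0)$ (finite because $f$ is twice differentiable). Thus $h$ is $C^1$ on $[0,1]$ and the change of variables introduces no singular behaviour near the origin.
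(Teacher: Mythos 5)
Your proof is correct. The monotonicity part is essentially the paper's argument: you differentiate \eqref{def_binary} to get the same expression for $g'(t)$, and where the paper uses the supporting-line inequality $f(v(t))-f(v(-t))> f'(v(-t))(v(t)-v(-t))$ together with $f'(v(-t))>f'(v(t))$, you use the mean value theorem; after writing $(1+t)f'(u)-(1-t)f'(v)-2tf'(\xi)=(1-t)\bigl(f'(u)-f'(v)\bigr)+2t\bigl(f'(u)-f'(\xi)\bigr)$ the positivity for $t\in(0,1)$ is indeed immediate, so "visibly positive" is justified (your data-processing alternative is also fine as a sketch, though the strictness claim there would need the standard extra argument about non-sufficiency of the channel). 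For the concavity of $G^2$, you take a genuinely different and arguably cleaner route: the paper computes $\frac{\dif^2(G(T)^2)}{\dif T^2}=\frac{2}{g'(t)^3}\bigl(g'(t)-tg''(t)\bigr)$ via the chain rule and invokes $tg''(t)-g'(t)\geq 0$, whereas you reparametrize $\tau=t^2$, observe that $G^2$ is the inverse of $h(\tau):=g(\sqrt{\tau})$, note that $h'(\tau)=\frac{g'(t)}{2t}$ so the hypothesis is literally convexity of $h$, and conclude by the inverse-of-a-convex-increasing-function fact. Your route explains why the hypothesis $\frac1t g'(t)$ non-decreasing is the natural one, avoids second derivatives of $g$ and $G$ (so it needs one less degree of smoothness than the paper's computation, which uses $g''$), and your remark that $g$ is even, so $g'(0)=0$ and $g'(t)/t$ extends continuously at $0$, tidies up a boundary point the paper passes over silently; the paper's computation, in exchange, delivers an explicit sign formula for the second derivative that is reusable in the proofs of the theorems only through the concavity statement itself, so nothing is lost by your shortcut.
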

\begin{proof}
From \eqref{def_binary}, we have
\begin{align}
\label{pr1_5}
g'(t)&=\frac12\Bigl(f(v(t))-f(v(-t))\Bigr)+\frac{1}{1-t}f'(v(-t))-\frac{1}{1+t}f'(v(t)), 
\end{align}
where $v(t):=\frac{1-t}{1+t}=\frac{2}{1+t}-1$.
From the strictly convexity of $f$, we have $f(v(t))-f(v(-t))> f'(v(-t))(v(t)-v(-t))$ and $f'(v(-t))-f'(v(t))>0$.
By combining these relations with \eqref{pr1_5}, we have
\begin{align}
\label{pr1_7}
g'(t)> \frac{1}{(1+t)}\Bigl(f'(v(-t))-f'(v(t))\Bigr) > 0.
\end{align}
Hence, the function $g$ is strictly increasing and the inverse function $G$ exists. Differentiating $G(T)^2$ with respect to $T:=g(t)$, we have
\begin{align}
\label{pr1_1}
\frac{\dif^2 (G(T)^2)}{\dif T^2}&=2\Bigl(\frac{\dif G(T)}{\dif T}\Bigr)^2+2G(T)\frac{\dif^2 G(T)}{\dif T^2}.
\end{align}
Furthermore,
\begin{align}
\label{pr1_2}
\frac{\dif G(T)}{\dif T}&=\frac{\dif t}{\dif T}=\frac{1}{g'(t)}, \\
\label{pr1_3}
\frac{\dif^2 G(T)}{\dif T^2}&=-\frac{g''(t)}{g'(t)^3}.
\end{align}
Substituting \eqref{pr1_2} and \eqref{pr1_3} into \eqref{pr1_1}, we have
\begin{align}
\label{pr1_4}
\frac{\dif^2 (G(T)^2)}{\dif T^2}=\frac{2}{g'(t)^3}\Bigl(g'(t)-tg''(t)\Bigr),
\end{align}
where we use $t=G(T)$.
From the assumption, we have $tg''(t)-g'(t)\geq 0$ for all $t\in(0,1)$.
By combining this inequality with \eqref{pr1_7} and \eqref{pr1_4}, we have $\frac{\dif^2 (G(T)^2)}{\dif T^2}\leq 0$.
Hence, $G(T)^2$ is concave.

\end{proof}

\begin{lemma}{(Sedrakyan's inequality)}
\label{lem_2}
Let $u,v: \mathcal{X}\rightarrow \Reals$ and $v(x)>0$ for al $x\in\mathcal{X}$.
Then, 
\begin{align}
\label{lem2_1}
\int \frac{u^2}{v}\dif\mu  \geq \frac{\Bigl(\int u\dif\mu\Bigr)^2}{\int v\dif\mu}.
\end{align}
The equality holds if and only if there exists a constant $c\in\Reals$ such that $u=cv$ for all $x\in\mathcal{X}$.
\end{lemma}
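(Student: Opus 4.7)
The plan is to reduce Sedrakyan's inequality to the standard Cauchy--Schwarz inequality in $L^2(\mu)$. Since $v>0$ everywhere on $\mathcal{X}$, I can write the pointwise factorization $u = \frac{u}{\sqrt{v}}\cdot\sqrt{v}$ and regard the two factors as elements of $L^2(\mu)$ (assuming the right-hand side of \eqref{lem2_1} is finite; otherwise the inequality is trivial under the usual conventions). Applying Cauchy--Schwarz to the integral of their product then yields
\begin{align}
\Bigl(\int u\,\dif\mu\Bigr)^2 = \Bigl(\int \frac{u}{\sqrt{v}}\cdot\sqrt{v}\,\dif\mu\Bigr)^2 \leq \int \frac{u^2}{v}\,\dif\mu \cdot \int v\,\dif\mu,
\end{align}
and dividing by $\int v\,\dif\mu>0$ produces exactly \eqref{lem2_1}.

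For the equality clause, I would invoke the equality case of Cauchy--Schwarz: equality holds if and only if the two factors $u/\sqrt{v}$ and $\sqrt{v}$ are linearly dependent in $L^2(\mu)$, that is, there exists $c\in\Reals$ with $u/\sqrt{v}=c\sqrt{v}$ almost everywhere. Rearranging gives $u=cv$ almost everywhere, which is the stated condition (interpreted in the $\mu$-almost-everywhere sense rather than strictly pointwise).

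I do not expect any substantial obstacle, since the lemma is essentially a direct corollary of Cauchy--Schwarz. The only minor points requiring care are the edge cases where one of the integrals is infinite, where $\int v\,\dif\mu=0$ is excluded by the hypothesis $v>0$, and the implicit upgrading of ``for all $x\in\mathcal{X}$'' in the equality condition to ``for $\mu$-almost every $x$''.
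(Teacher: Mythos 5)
Your proof is correct and coincides with the paper's own argument: apply the Cauchy--Schwarz inequality to the factors $u/\sqrt{v}$ and $\sqrt{v}$, then divide by $\int v\,\dif\mu>0$, with the equality case inherited from Cauchy--Schwarz. Your extra remarks on infinite integrals and the almost-everywhere reading of the equality condition are sensible refinements but do not change the approach.
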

\begin{proof}
Applying the Cauchy-Schwarz inequality for $\frac{u}{\sqrt{v}}$ and $\sqrt{v}$, it follows that
\begin{align}
\int \frac{u^2}{v}\dif\mu \int v \dif\mu\geq \Bigl(\int u \dif\mu\Bigr)^2.
\end{align}
Hence, we obtain \eqref{lem2_1} and the equality condition.
\end{proof}

\begin{lemma}
\label{lem_3}
For $x\in\Reals$ and $r$ defined by \eqref{th_2_4}, let $(R_r, R_r^\dagger)$ be a pair of two-element probability measures defined on $\{-x\frac{a}{|a|}+\frac{m_P+m_Q}2, x\frac{a}{|a|}+\frac{m_P+m_Q}2\}$, where $x:=\frac{\sqrt{4\sigma^2+a^2}}2$ and $\frac{a}{|a|}:=1$ when $a=0$.
Then, $(R_r, R_r^\dagger)\in\set{P}[m_P, \sigma; m_Q, \sigma]$.
\end{lemma}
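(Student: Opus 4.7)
The plan is a direct verification: write down the two probability measures explicitly, compute their means and variances, and match them against the constraints defining $\set{P}[m_P, \sigma; m_Q, \sigma]$. The statement involves no inequalities or nontrivial convexity arguments, so this should be mechanical once the algebra is organized.

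To set notation, let $\epsilon := a/|a|$ (with the stated convention $\epsilon=1$ when $a=0$), $\mu_0 := (m_P+m_Q)/2$, $x_- := -\epsilon x + \mu_0$ and $x_+ := \epsilon x + \mu_0$, so that both $R_r$ and $R_r^\dagger$ are supported on $\{x_-,x_+\}$. By Definition~\ref{def:binary f div}, $R_r$ assigns mass $(1-r)/2$ to $x_-$ and $(1+r)/2$ to $x_+$, while $R_r^\dagger$ swaps these weights. The single identity that makes everything work out is
\begin{align}
\label{key_identity}
r\,x = \frac{|a|}{\sqrt{4\sigma^2+a^2}}\cdot\frac{\sqrt{4\sigma^2+a^2}}{2} = \frac{|a|}{2},
\end{align}
so in particular $\epsilon\, r\, x = a/2$.

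First I would compute the means. A direct expansion gives $\expectation_{X\sim R_r}[X] = \mu_0 + \epsilon r x = \mu_0 + a/2 = m_P$ and $\expectation_{Y\sim R_r^\dagger}[Y] = \mu_0 - \epsilon r x = \mu_0 - a/2 = m_Q$, verifying the mean constraints. For the variances, since both measures are supported on the symmetric pair $\{x_-,x_+\}$ relative to $\mu_0$, the most efficient route is to note that for any Bernoulli-type law on $\{\mu_0-\epsilon x,\mu_0+\epsilon x\}$ with weights $(1\mp\rho)/2$ the variance equals $x^2(1-\rho^2)$. Applying this with $\rho = \pm r$ gives, in both cases,
\begin{align}
\Var[X] = x^2(1-r^2) = \frac{4\sigma^2+a^2}{4}\left(1 - \frac{a^2}{4\sigma^2+a^2}\right) = \sigma^2,
\end{align}
and likewise $\Var[Y] = \sigma^2$.

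No step here is an obstacle in the usual sense; the only pitfall is bookkeeping around the sign $\epsilon$ and the edge case $a=0$, which the stated convention $\epsilon=1$ handles cleanly (when $a=0$ the two measures collapse to the uniform law on $\{\mu_0 \pm \sigma\}$, which trivially has mean $\mu_0 = m_P = m_Q$ and variance $\sigma^2$). Combining the mean and variance calculations yields $(R_r,R_r^\dagger)\in\set{P}[m_P,\sigma;m_Q,\sigma]$, completing the lemma.
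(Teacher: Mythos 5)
Your proposal is correct and takes essentially the same route as the paper: a direct verification of the mean and variance constraints, where the paper first reduces to the centered case via the shift $c=\frac{m_P+m_Q}2$ and leaves the two-point computation as ``easily verified,'' while you carry out that algebra explicitly through the identities $rx=\frac{|a|}2$ and $\Var = x^2(1-r^2)=\sigma^2$. No gaps; nothing to fix.
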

\begin{proof}
It can be easily verified that $(R_r, R_r^\dagger)$ defined on $\{-x\frac{a}{|a|}, x\frac{a}{|a|}\}$ belongs to $\set{P}[\frac{a}2, \sigma; -\frac{a}2, \sigma]$. Since $\expectation[X+c]=\expectation[X]+c$ and $\Var[X+c]=\Var[X]$ for $c=\frac{m_P+m_Q}2$, the result follows. 
\end{proof}

\subsubsection{Proof of Theorem~\ref{th_1}}  \mbox{}\\
\label{proof_th_1}
\begin{proof}
For $p,q\in[0,\infty)$, substituting $t=\frac{|p-q|}{p+q}\leq 1$ into $G(g(t))^2=t^2$ and using \eqref{def_binary}, we have
\begin{align}
\label{pr2_1}
G\Bigl(g\Bigl(\frac{p-q}{p+q}\Bigr)\Bigr)^2=G\Bigl(\frac{q}{p+q}f\Bigl(\frac{p}{q}\Bigr)+\frac{p}{p+q}f\Bigl(\frac{q}{p}\Bigr)\Bigr)^2=\frac{(p-q)^2}{(p+q)^2}.
\end{align}
Multiplying \eqref{pr2_1} by $\frac{p+q}2$ and integrating it, we have 
\begin{align}
\label{pr2_2}
\int \frac{(p+q)}2G\Bigl(\frac{q}{p+q}f\Bigl(\frac{p}{q}\Bigr)+\frac{p}{p+q}f\Bigl(\frac{q}{p}\Bigr)\Bigr)^2\dif \mu=\int \frac{(p-q)^2}{2(p+q)}\dif \mu=\Delta(P,Q)=d.
\end{align}
From Lemma~\ref{lem_1}, by applying the Jensen's inequality for $G(\cdot)^2$, we have
\begin{align}
\label{pr2_3}
d=\int \frac{(p+q)}2G\Bigl(\frac{q}{p+q}f\Bigl(\frac{p}{q}\Bigr)+\frac{p}{p+q}f\Bigl(\frac{q}{p}\Bigr)\Bigr)^2\dif \mu\leq G\Bigl(\frac12\Bigl(D_f(P\|Q)+D_f(Q\|P)\Bigr)\Bigr)^2,
\end{align}
where we use the definition of $f$-divergence and $\frac12\int (p+q) \dif \mu=1$. 
Since $g$ is monotonically increasing from Lemma~\ref{lem_1}, taking the square root of both sides and applying the function $g$, we have
\begin{align}
\label{pr2_4}
\frac12\Bigl(D_f(P\|Q)+D_f(Q\|P)\Bigr) \geq g(\sqrt{d}).
\end{align}
Considering $(P,Q)=(R_{\sqrt{d}},R^\dagger_{\sqrt{d}})$, we have $\Delta(R_{\sqrt{d}},R^\dagger_{\sqrt{d}})=d$ and $D_f(R_{\sqrt{d}}\|R^\dagger_{\sqrt{d}})=D_f(R^\dagger_{\sqrt{d}}\|R_{\sqrt{d}})=g(\sqrt{d})$ from Definition~\ref{def:binary f div}.
Hence, the equality holds in \eqref{pr2_4}.
\end{proof}
\remark Considering $G(\cdot)$ instead of $G(\cdot)^2$ in this proof, we obtain the tight lower bounds for arbitrary symmetric $f$-divergences under a given total variation distance, which was shown in~\cite{sason2014tight}.
\subsubsection{Proof of Theorem~\ref{th_2}} \mbox{}\\
\begin{proof}
Under transformation $x\rightarrow x -\frac{m_P+m_Q}2$, $f$-divergences are invariant. Without any loss of generality, one can assume $(P,Q)\in \set{P}[\frac{a}2, \sigma_P; -\frac{a}2, \sigma_Q]$. By applying Lemma~\ref{lem_2} for \eqref{pr2_2} and combining the result with \eqref{pr2_3}, we have 
\begin{align}
G\Bigl(\frac12\Bigl(D_f(P\|Q)+D_f(Q\|P)\Bigr)\Bigr)^2\geq \int \frac{(p-q)^2}{2(p+q)}\dif \mu\geq \int_{x\neq 0} \frac{(x p-x q)^2}{2(x^2 p+x^2 q)}\dif \mu \nonumber \\
\geq  \frac{\Bigr( \int_{x\neq 0} (x p-xq)\dif\mu\Bigr)^2}{2\int_{x\neq 0} (x^2 p+x^2 q)\dif\mu}=\frac{a^2}{2(\sigma_P^2+\sigma_Q^2)+a^2}=s^2.
\end{align}
In the similar way to the proof of Theorem~\ref{th_1}, we obtain \eqref{th2_1}.
When $\sigma_P=\sigma_Q$, considering $(P,Q)=(R_{r},R^\dagger_{r})$, we have $D_f(R_{r}\|R^\dagger_{r})=D_f(R^\dagger_{r}\|R_{r})=g(r)$ from Definition~\ref{def:binary f div} and $(R_r,R^\dagger_{r})\in\set{P}[m_P, \sigma; m_Q, \sigma]$ from Lemma~\ref{lem_3}. 
Hence, we obtain \eqref{th_2_3}.

\end{proof}
\section{Applications of main results}
\label{sec_appli}
\subsection{Inequalities for the triangular discrimination}
We show inequalities between the triangular discrimination and other symmetric $f$-divergences from Theorem~\ref{th_1}.
\begin{itemize}
\item Squared Hellinger distance:\\
Since the binary squared Hellinger distance is defined by $g(t)=1-\sqrt{1-t^2}$ and $\frac1t g'(t)=\frac{1}{\sqrt{1-t^2}}$ is non-decreasing, we obtain 
\begin{align}
\label{ex_1}
H^2(P,Q)\geq 1-\sqrt{1-\Delta(P,Q)}=\frac{\Delta(P,Q)}{1+\sqrt{1-\Delta(P,Q)}}.
\end{align}
Since $\Delta(P,Q)\in[0,1]$, this inequality is tighter than $H^2(P,Q)\geq \frac12 \Delta(P,Q)$ in~\cite{le2012asymptotic}. From \eqref{ex_1}, we also have
\begin{align}
\Delta(P,Q)+Z(P,Q)^2\leq 1,
\end{align}
where $Z(P,Q):=\int \sqrt{pq} \dif \mu$ denotes the Bhattacharyya coefficient~\cite{kailath1967divergence}.

\item Jensen-Shannon divergence:\\
Since the binary Jensen-Shannon divergence is defined by $g(t)=\frac{(1+t)}2\log(1+t)+\frac{(1-t)}2\log(1-t)$ and  $\frac1t g'(t)=\frac1{2t} \log\frac{1+t}{1-t}$ is non-decreasing, we obtain
\begin{align}
\label{ex_2}
\mathrm{JS}(P,Q)\geq g(\sqrt{\Delta(P,Q)})=\log 2-H_b(R_{\sqrt{\Delta(P,Q)}}),
\end{align}
where $H_b(R_t):=-\frac{(1-t)}2\log \frac{1-t}2 - \frac{(1+t)}2\log\frac{1+t}2$ denotes the binary entropy. The right-hand side of \eqref{ex_2} was calculated numerically by solving an optimization problem~\cite{guntuboyina2013sharp}. From $g'(\sqrt{t})=\frac{1}{4\sqrt{t}}\log\frac{(1+\sqrt{t})}{(1-\sqrt{t})}$ and $\log\frac{(1+t)}{(1-t)} \geq 2(\log e) t$, we have $g'(\sqrt{t})\geq \frac{\log e}2$. By combining this relation with $g(0)=0$, we have $g(\sqrt{t})\geq \frac{\log e}2 t$. Therefore, the inequality \eqref{ex_2} is tighter than $\mathrm{JS}(P,Q)\geq \frac{\log e}2\Delta(P,Q)$ in~\cite{topsoe2000some}. 
\end{itemize}

Next, we show a tight bound for the triangular discrimination under given means and variances. Since the binary triangular discrimination is given by $g(t)=t^2$, from Theorem~\ref{th_2}, it follows that
\begin{align}
\Delta(P,Q)\geq s^2,
\end{align}
where $s$ is defined by \eqref{th2_2}. This inequality was shown in~\cite{falasco2022beyond}.
Let $P=(p,1-p)$ and $Q=(q,1-q)$ be two-element probability measures defined on $\{x_1+\frac{m_P+m_Q}2, x_2+\frac{m_P+m_Q}2\}$, and $(P,Q)\in \set{P}[m_P, \sigma_P; m_Q, \sigma_Q]$. The existence of $(P,Q)$ was shown in~\cite{nishiyama2020relations}. We show that the binary triangular discrimination $\Delta(P,Q)$ attains the lower bound. 
From the assumption, we have
\begin{align}
p(x_1-x_2)+x_2=\frac{a}2, \; q(x_1-x_2)+x_2=-\frac{a}2.
\end{align}
Taking the sum of these relations, we have
\begin{align}
(p+q)(x_1-x_2)+2x_2=0,  \;  (1-p+1-q)(x_2-x_1)+2x_1=0.
\end{align}
Therefore, there exists $c=(p-q)\frac{(x_2-x_1)}{2x_1x_2}$ such that
\begin{align}
p-q=c(p+q)x_1, \; 1-p-(1-q)=c(1-p+1-q)x_2.
\end{align}
Since this relation is the equality condition in Lemma~\ref{lem_2}, we have 
\begin{align}
\Delta(P,Q)=\frac{\{(p-q)x_1\}^2}{2(p+q)x_1^2}+\frac{\{(1-p-(1-q))x_2\}^2}{2(1-p+1-q)x_2^2}=\frac{a^2}{2(\sigma_P^2+\sigma_Q^2)+a^2}=s^2.
\end{align}

\subsection{Application for nonequilibrium physics}
\label{subsec_thermo}
As a further application of Theorem~\ref{th_1}, we show the relation among the total entropy production, pseudo-entropy production~\cite{shiraishi2021optimal} and dynamical activity in nonequilibrium physics. This is the key relaiton to derive the unified thermodynamic-kinetic uncertainty relation~\cite{Vo_2022}. We consider a discrete-state system described by a  continuous-time Markov jump process. The time evolution of the system is given by
\begin{align}
\label{tkur_1_1}
\frac{\mathrm{d}p_n(t)}{\mathrm{d}t} = \sum_{m} p_m(t)R_{nm},
\end{align}
where $p_n(t)$ denotes the probability in state $n$ at time $t$, and $R_{nm}$ denotes the transition rate from state $m$ to state $n$. We suppose that the transition rate is time-independent. The transition rate satisfies $\sum_{n} R_{nm}=0$ and $R_{nm}\geq 0$ for $n\neq m$. 
Assuming the local detailed balance condition, the total entropy production $\Sigma_\tau$ and the pseudo-entropy production  $\Sigma_\tau^{\mathrm{ps}}$ are defined by
\begin{align}
\label{tkur_1_2}
\Sigma_\tau &:= \int_0^\tau \mathrm{d}t  \sum_{n<m} (K_{nm}(t)-K_{mn}(t))\ln\frac{K_{nm}(t)}{K_{mn}(t)}, \\
\Sigma_\tau^{\mathrm{ps}} &:= 2\int_0^\tau \dif t\sum_{n<m} \frac{(K_{nm}(t)-K_{mn}(t))^2}{a_{nm}(t)},
\end{align}
where $K_{nm}(t):=p_m(t)R_{nm}$ and $a_{nm}(v,t):=K_{nm}(t)+K_{mn}(t)$ denotes the jump frequency associated with the transition between $m$ and $n$. 
Let $\mathcal{S}$ be a set of states, and let $P$ and $P^\dagger$ be probability measures defined on the set $\{(n, m, t) | n,m \in \mathcal{S}, n\neq m, t \in [0, \tau]\}$ as follows.
\begin{align}
\label{tkur_1_3}
&P(n,m,t):=\frac{1}{\A_\tau} K_{nm}(t), \\
\label{tkur_1_4}
&P^\dagger(n,m,t):=P(m,n,t), \\
&\int_0^\tau\mathrm{d}t\sum_{n\neq m} P(n,m,t)=1,
\end{align}
where 
\begin{align}
\label{tkur_1_5}
\A_\tau := \int_0^\tau \mathrm{d}t  \sum_{n<m} a_{nm}(t),
\end{align}
denotes the dynamical activity.
This definition connects the total entropy production and the pseudo-entropy production with the Kullback-Leibler divergence and the triangular discrimination, respectively.
\begin{align}
\label{tkur_1_6}
\Sigma_\tau &= \A_\tau D(P\|P^\dagger), \\
\Sigma_\tau^{\mathrm{ps}} &= 2\A_\tau \Delta(P,P^\dagger).
\end{align}
The binary Kullback-Leibler divergence is given by $g(t)=t\ln \frac{1+t}{1-t}$ and $\frac1t g'(t)=\frac1t \ln \frac{1+t}{1-t}+\frac{2}{1-t^2}$ is non-decreasing. From Theorem~\ref{th_1} and $D(P\|P^\dagger)=D(P^\dagger \|P)$, we obtain
\begin{align}
\Sigma_\tau\geq \A_\tau g\Bigl(\sqrt{\frac{\Sigma_\tau^{\mathrm{ps}}}{2\A_\tau}}\Bigr).
\end{align}
This relation is equivalent to the inequality in~\cite{Vo_2022} (see~\cite{nishiyama2022thermo}).
\section{Conclusion}
We have derived the sufficient condition that the binary $f$-divergences provide lower bounds for symmetric $f$-divergences under a given triangular discrimination or given means and variances. Assuming this sufficient condition, the former bounds are tight, and the latter bounds are tight when two probability measures have the same variance. From these results, we have derived novel inequalities for the triangular discrimination and we have given an another derivation of the relation among the total entropy production, pseudo-entropy production and dynamical activity in nonequilibrium physics.

Future work is to extend our results to asymmetric $f$-divergences and to study conditions that the binary $f$-divergences attain lower bounds when two probability measures have different variances.

\bibliography{reference_f_div} 
\bibliographystyle{myplain} 


\end{document}